\newtheorem{theorem}{Theorem}
\newtheorem{corollary}{Corollary}
\DeclareMathOperator*{\modl}{mod}
\DeclareMathOperator*{\argmin}{argmin}
\begin{document}
\title{Hybrid Coding for Gaussian Broadcast Channels with Gaussian Sources}

\author{
\authorblockN{Rajiv Soundararajan}
\authorblockA{Department of Electrical \& Computer Engineering\\
University of Texas at Austin \\
Austin, TX 78712, USA\\
Email: rajiv.s@mail.utexas.edu}
\and
\authorblockN{Sriram Vishwanath}
\authorblockA{Department of Electrical \& Computer Engineering\\
University of Texas at Austin\\
Austin, TX 78712, USA\\
Email: sriram@ece.utexas.edu}
}

\maketitle

\begin{abstract}
This paper considers a degraded Gaussian broadcast channel over which Gaussian sources are to be communicated. When the sources are independent, this paper shows that hybrid coding achieves the optimal distortion region, the same as that of separate source and channel coding. It also shows that uncoded transmission is not optimal for this setting. For correlated sources, the paper shows that a hybrid coding strategy has a better distortion region than separate source-channel coding below a certain signal to noise ratio threshold. Thus, hybrid coding is a good choice for Gaussian broadcast channels with correlated Gaussian sources.\footnote{This work is supported by a grant from AT\&T Labs Austin, a grant from the Army Research Office and a grant from the Air Force Office of Sponsored Research.}
\end{abstract}

\section{Introduction}
\label{sec-introduction} 

The transmission of sources over a Gaussian broadcast channel \cite{Cover72} is a fundamental problem in information theory and arguably one of the better understood questions. In the case of independent sources over this degraded channel, the capacity region is characterized in \cite{Bergmans,Gallager1}. The achievable strategy for this channel in \cite{Bergmans} is superposition coding, but {\em dirty paper coding} \cite{Marton} can also be used to the same effect.

In contrast, the existing body of work on correlated sources over a broadcast channel is somewhat limited \cite{Han_Costa}. As source-channel separation does not hold, it is difficult to construct coding strategies and establish their optimality. Recently in \cite{Bross2008}, uncoded transmission of correlated Gaussians over a Gaussian broadcast channel was shown to be optimal below a signal to noise ratio (SNR) threshold . In related work, the transmission of a common source over a Gaussian broadcast channel with receiver side information was studied in \cite{Gunduz2008}.

In this work, we consider hybrid coding as a strategy for the Gaussian broadcast channel with or without correlated sources. By hybrid coding, we mean strategies that superimpose uncoded and coded transmission in communicating the sources to the destinations. Our hybrid coding strategy bears close resemblance to the dirty-paper-coding strategy using lattices, as developed in \cite{Erez_Shamai_Zamir}. We show that this hybrid strategy is optimal for the Gaussian broadcast channel with independent Gaussian sources. Extending it to the correlated case, we find that the strategy performs better (in terms of the distortion region achieved) than separate source and channel coding for SNRs below a certain threshold. 

In the next section, we present the system model. In Section \ref{sec:outerbound}, we present an outer bound on the  distortion region for this channel. We present the achievable scheme and the resulting distortion region for this channel in Section \ref{sec:achievable} and conclude with Section \ref{sec:conclude}.

\section{System Model}
\label{sec:system-model}
\begin{figure}[!th]
\centering
\scalebox{0.43}{
\input{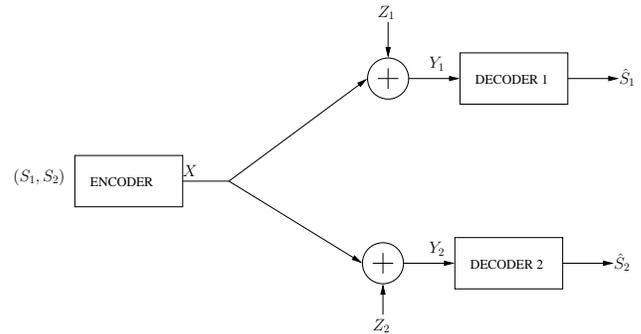}}
\caption{System Model}
\end{figure}
The system model is depicted in Fig. 1. Consider a sequence of independent and identically distributed (i.i.d) pair of correlated Gaussians $\{(S_{1}(i),S_{2}(i))\}_{i=1}^{n}$ with mean zero and covariance matrix,
\begin{equation*}
\mathbf{\Sigma}(i) = \left[\begin{array}{ccc} \sigma^{2} & \rho\sigma^{2} \\
\rho\sigma^{2} & \sigma^{2} \end{array}\right].
\end{equation*}
The goal is to transmit the pair over a degraded Gaussian broadcast channel to Receivers 1 and 2 respectively at the smallest possible distortion. We assume, with loss of generality, that $\rho>0$ and that the variances of $S_{1}(i)$ and $S_{2}(i)$ are equal. The transmitter applies an encoding function on the observed source sequence and transmits it over the channel. Mathematically,
\begin{equation*}
X^{n} = f(S^{n}_{1},S^{n}_{2})
\end{equation*}
where $S^{n}_{1}$ and $S^{n}_{2}$ denote $n$-length vectors. The transmitter is limited by an average second moment constraint on the channel input given by
\begin{equation*}
\frac{1}{n}\sum_{i=1}^{n}\mathbb{E}[(X(i))^{2}] \leq P.
\end{equation*}

The channel outputs at the two receivers are given by
\begin{align*}
Y_{1}(i) &= X(i)+Z_{1}(i)\\
Y_{2}(i) &= X(i)+Z_{2}(i)
\end{align*}
for $i=1,\ldots,n$, where $Z_{1}(i)$ and $Z_{2}(i)$ form an i.i.d sequence, independent of each other and are Gaussian distributed with mean zero and variance $N_{1}$ and $N_{2}$. Further, we assume that the broadcast channel is physically degraded with $N_{2}>N_{1}$. The receivers obtain estimates of the sources by applying a function on the received outputs. This is represented mathematically as 
\begin{equation*}
\hat{S}^{n}_{k} = \phi_{k}(Y^{n}_{k})
\end{equation*}
for $k=1,2$. The goal is to obtain estimates $\hat{S}^{n}_{1}$ and $\hat{S}^{n}_{2}$ within the minimum possible mean squared error. Therefore, we wish to obtain the smallest possible $D_{1}$ and $D_{2}$ where 
\begin{equation*}
D_{k} = \frac{1}{n}\sum_{i=1}^{n}\mathbb{E}[(S_{k}(i)-\hat{S}_{k}(i))^{2}]
\end{equation*}
for $k=1,2$. The distortion region $\mathcal{D}(\sigma^{2},\rho,P,N_{1},N_{2})$ is defined as the set of all pairs $(D_{1},D_{2})$ such that there exist encoding and decoding functions $f$, $\phi_{1}$ and $\phi_{2}$ resulting in distortions $D_{1}$ and $D_{2}$ at Receivers 1 and 2 respectively. Note that all logarithms are with respect to base 2 throughout the paper and $\mathbb{E}$ denotes the expected value of a random variable. 

\section{Outer Bound on Distortion Region}
\label{sec:outerbound}
We now present an outer bound on the conditional distortion region for the transmission of correlated Gaussian sources over a degraded broadcast channel. Let $\phi_{1|2}$ be the decoding function given the knowledge of both $Y^{n}_{1}$ and $S^{n}_{2}$ at Receiver 1. The conditional distortion region $\mathcal{D}_{c}(\sigma^{2},\rho,P,N_{1},N_{2})$, is defined as the set of all pairs $(D_{1|2},D_{2})$ such that there exist encoding function $f$ and decoding functions $\phi_{1|2}$ and $\phi_{2}$ resulting in distortions $D_{1|2}$ and $D_{2}$ at Receivers 1 and 2 respectively. The region described by Theorem 1 below is an alternative way of describing the outer bound region for the same channel presented in \cite{Bross2008}.\\

\begin{theorem} \label{theorem:outerbound} The conditional distortion region for transmission of correlated Gaussian sources over a degraded broadcast channel, $\mathcal{D}_{c}(\sigma^{2},\rho,P,N_{1},N_{2})$, consists of all pairs $(D_{1|2},D_{2})$ such that
\begin{equation*}
D_{1|2} \geq \frac{\sigma^{2}(1-\rho^{2})}{1+\frac{\alpha_{1}P}{N_{1}}} \textrm{ and }
D_{2} \geq \frac{\sigma^{2}}{1+\frac{(1-\alpha_{1})P}{\alpha_{1}P+N_{2}}}
\end{equation*}
where $\alpha_{1} \in [0,1]$. 
\end{theorem}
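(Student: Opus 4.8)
The plan is to prove Theorem~\ref{theorem:outerbound} as a matched converse and achievability result, the converse being the substantive part. For the converse, fix any code $(f,\phi_{1|2},\phi_{2})$ achieving $(D_{1|2},D_{2})$ and introduce a single parameter $\alpha_{1}$ through the conditional entropy power of the stronger receiver's output:
\[
2\pi e\,(\alpha_{1}P+N_{1})=\exp\!\Big(\tfrac{2}{n}\,h(Y_{1}^{n}\mid S_{2}^{n})\Big).
\]
Since $h(Y_{1}^{n}\mid S_{2}^{n})\le h(Y_{1}^{n})\le\tfrac{n}{2}\log 2\pi e(P+N_{1})$ by the maximum-entropy inequality and the power constraint, $\alpha_{1}\le 1$; since $h(Y_{1}^{n}\mid S_{2}^{n})\ge h(Y_{1}^{n}\mid S_{1}^{n},S_{2}^{n})=h(Z_{1}^{n})=\tfrac{n}{2}\log 2\pi e N_{1}$, $\alpha_{1}\ge 0$. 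So $\alpha_{1}\in[0,1]$, as required.

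For Receiver~1, I would combine the conditional rate--distortion converse with the data-processing inequality. As $(S_{1},S_{2})$ are jointly Gaussian, $S_{1}\mid S_{2}\sim\mathcal{N}(\rho S_{2},\sigma^{2}(1-\rho^{2}))$, so the i.i.d.\ conditional source has conditional rate--distortion function $R_{S_{1}\mid S_{2}}(D)=\tfrac{1}{2}\log\frac{\sigma^{2}(1-\rho^{2})}{D}$. Using $\hat S_{1}^{n}=\phi_{1|2}(Y_{1}^{n},S_{2}^{n})$ and $Y_{1}^{n}=f(S_{1}^{n},S_{2}^{n})+Z_{1}^{n}$,
\[
nR_{S_{1}\mid S_{2}}(D_{1|2})\le I(S_{1}^{n};\hat S_{1}^{n}\mid S_{2}^{n})\le I(S_{1}^{n};Y_{1}^{n}\mid S_{2}^{n})=h(Y_{1}^{n}\mid S_{2}^{n})-h(Z_{1}^{n})=\tfrac{n}{2}\log\!\Big(1+\tfrac{\alpha_{1}P}{N_{1}}\Big),
\]
where the first inequality requires single-letterizing $I(S_{1}^{n};\hat S_{1}^{n}\mid S_{2}^{n})$ and applying convexity of $R_{S_{1}\mid S_{2}}(\cdot)$ to the per-symbol distortions $D_{1|2}(i)=\mathbb{E}[(S_{1}(i)-\hat S_{1}(i))^{2}]$. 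Rearranging yields the first inequality of the theorem.

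For Receiver~2, I would use degradedness: write $Y_{2}^{n}=Y_{1}^{n}+Z'^{n}$ with $Z'^{n}$ i.i.d.\ $\mathcal{N}(0,N_{2}-N_{1})$ and independent of $(X^{n},Z_{1}^{n},S_{1}^{n},S_{2}^{n})$. The conditional entropy-power inequality then gives $\exp\!\big(\tfrac{2}{n}h(Y_{2}^{n}\mid S_{2}^{n})\big)\ge\exp\!\big(\tfrac{2}{n}h(Y_{1}^{n}\mid S_{2}^{n})\big)+2\pi e(N_{2}-N_{1})=2\pi e(\alpha_{1}P+N_{2})$. Combined with $h(Y_{2}^{n})\le\tfrac{n}{2}\log 2\pi e(P+N_{2})$ and the rate--distortion bound $\tfrac{n}{2}\log\frac{\sigma^{2}}{D_{2}}\le I(S_{2}^{n};\hat S_{2}^{n})\le I(S_{2}^{n};Y_{2}^{n})=h(Y_{2}^{n})-h(Y_{2}^{n}\mid S_{2}^{n})$, this gives $\tfrac{n}{2}\log\frac{\sigma^{2}}{D_{2}}\le\tfrac{n}{2}\log\frac{P+N_{2}}{\alpha_{1}P+N_{2}}$; since $\frac{P+N_{2}}{\alpha_{1}P+N_{2}}=1+\frac{(1-\alpha_{1})P}{\alpha_{1}P+N_{2}}$, this is the second inequality.

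For the reverse inclusion, write $S_{1}=\rho S_{2}+W$ with $W\sim\mathcal{N}(0,\sigma^{2}(1-\rho^{2}))$ independent of $S_{2}$. Since the genie hands $S_{2}^{n}$ to Receiver~1, the conditional problem reduces to transmitting the \emph{independent} Gaussian sources $W$ (to Receiver~1) and $S_{2}$ (to Receiver~2) over the degraded broadcast channel, for which separate source and channel coding is optimal: compress $W$ and $S_{2}$ at rates $R_{1}=\tfrac{1}{2}\log\frac{\sigma^{2}(1-\rho^{2})}{D_{1|2}}$ and $R_{2}=\tfrac{1}{2}\log\frac{\sigma^{2}}{D_{2}}$, transmit via a superposition code with power split $(\alpha_{1}P,(1-\alpha_{1})P)$, and let Receiver~1, the strong receiver, decode $\hat W^{n}$ and form $\hat S_{1}^{n}=\hat W^{n}+\rho S_{2}^{n}$; feasibility of $(R_{1},R_{2})$ in the degraded broadcast channel capacity region is exactly the stated pair of inequalities. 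The step I expect to demand the most care is the single-letterization of $I(S_{1}^{n};\hat S_{1}^{n}\mid S_{2}^{n})$ down to $\sum_{i}R_{S_{1}\mid S_{2}}(D_{1|2}(i))$ -- the i.i.d.\ genie side information $S_{2}^{n}$ must be inserted correctly into the conditional rate--distortion converse, and one then passes to the average distortion through convexity; verifying the independence hypotheses behind the conditional entropy-power inequality is the other point needing attention.
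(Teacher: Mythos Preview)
Your converse argument matches the paper's proof almost step for step: the same definition of $\alpha_{1}$ through $h(Y_{1}^{n}\mid S_{2}^{n})$, the same data-processing plus conditional rate--distortion bound for $D_{1|2}$, and the same entropy-power inequality exploiting the degraded decomposition $Y_{2}^{n}=Y_{1}^{n}+W^{n}$ to lower-bound $h(Y_{2}^{n}\mid S_{2}^{n})$ for $D_{2}$. The one departure is that you also supply an achievability (reverse-inclusion) argument via separation for the genie-aided problem; the paper, despite the ``consists of all pairs'' wording, proves only the outer-bound direction here (consistent with the section title) and leaves achievability to Section~\ref{sec:achievable}, so your addition is sound but goes beyond what the paper's proof of Theorem~\ref{theorem:outerbound} actually contains.
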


\begin{proof}
We first obtain a bound on the distortion $D_{1|2}$. By the data processing inequality (DPI), we have
\begin{equation}\label{eqn:dpi}
I(S^{n}_{1};\hat{S}^{n}_{1}|S^{n}_{2}) \leq I(S^{n}_{1};Y^{n}_{1}|S^{n}_{2}).
\end{equation}
The distortion in $S^{n}_{1}$ at Receiver 1 given that it knows $S^{n}_{2}$ and $Y^{n}_{1}$ is $D_{1|2}$ and the variance of $S^{n}_{1}$ given $S^{n}_{2}$ is $\sigma^{2}(1-\rho^{2})$. Since $S^{n}_{2}$ is known at both the transmitter and Receiver 1, 
\begin{equation}\label{eqn:sour1}
I(S^{n}_{1};\hat{S}^{n}_{1}|S^{n}_{2}) \geq \frac{n}{2}\log \frac{\sigma^{2}(1-\rho^{2})}{D_{1|2}}
\end{equation}
by definition of the rate distortion function for Gaussian sources \cite{Cover_Thomas}. Now, 
\begin{align}
I(S^{n}_{1};Y^{n}_{1}|S^{n}_{2}) &= h(Y^{n}_{1}|S^{n}_{2})-h(Y^{n}_{1}|S^{n}_{1},S^{n}_{2})\nonumber\\
&= \frac{n}{2}\log 2\pi e(\alpha_{1}P+N_{1})-h(Z^{n}_{1})\label{eqn:powalloc1}\\
&= \frac{n}{2}\log 2\pi e(\alpha_{1}P+N_{1})-\frac{n}{2}\log 2\pi eN_{1}\nonumber\\
&= \frac{n}{2}\log 2\pi e\bigg(1+\frac{\alpha_{1}P}{N_{1}}\bigg).\label{eqn:channel1}
\end{align}
The equality in (\ref{eqn:powalloc1}) results from the following argument. Since 
\begin{equation*}
\frac{n}{2}\log 2\pi eN_{1} \leq h(Y^{n}_{1}|S^{n}_{2}) \leq \frac{n}{2}\log 2\pi e(P+N_{1}),
\end{equation*}
there exists an $\alpha_{1}\in [0,1]$ such that 
\begin{equation}\label{eqn:powalloc}
h(Y^{n}_{1}|S^{n}_{2}) = \frac{n}{2}\log 2\pi e(\alpha_{1}P+N_{1}). 
\end{equation}
Therefore, from (\ref{eqn:sour1}) and (\ref{eqn:channel1}), we get
\begin{equation*}
D_{1|2} \geq \frac{\sigma^{2}(1-\rho^{2})}{1+\frac{\alpha_{1}P}{N_{1}}}.
\end{equation*}\\
\\
For source $S^{n}_{2}$, using DPI we observe that
\begin{equation}\label{eqn:dpis2}
I(S^{n}_{2};\hat{S}^{n}_{2}) \leq I(S^{n}_{2};Y^{n}_{2}).
\end{equation}
The rate distortion function for $S^{n}_{2}$ implies that 
\begin{equation}\label{eqn:source2}
\frac{n}{2}\log\frac{\sigma^{2}}{D_{2}} \leq I(S^{n}_{2};\hat{S}^{n}_{2}).
\end{equation}
Also, 
\begin{align}
I(S^{n}_{2};Y^{n}_{2}) &= h(Y^{n}_{2})-h(Y^{n}_{2}|S^{n}_{2})\nonumber\\
&\leq \frac{n}{2}\log 2\pi e(P+N_{2})-h(Y^{n}_{2}|S^{n}_{2})\label{eqn:gaussmax}\\
&\leq \frac{n}{2}\log 2\pi e(P+N_{2})-\frac{n}{2}\log 2\pi e(\alpha_{1}P+N_{2})\label{eqn:powalloc2}\\
&= \frac{n}{2}\log\bigg(1+\frac{(1-\alpha_{1})P}{\alpha_{1}P+N_{2}}\bigg),\label{eqn:channel2}
\end{align}
since in (\ref{eqn:gaussmax}), a Gaussian random variable maximizes entropy for a given variance and (\ref{eqn:powalloc2}) is true due to the following discussion. Note that due to the physically degraded nature of the broadcast channel, $Y^{n}_{2}$ may be written as $Y^{n}_{2}=Y^{n}_{1}+W^{n}$ where $W$ has variance $N_{2}-N_{1}$. Thus using (\ref{eqn:powalloc}) and entropy power inequality, we get
\begin{align*}
h(Y^{n}_{2}|S^{n}_{2}) &= h(Y^{n}_{1}+W^{n}|S^{n}_{2}) \\
&\geq \frac{n}{2}\log 2\pi e(\alpha_{1}P+N_{1}+N_{2}-N_{1})\\
&= \frac{n}{2}\log 2\pi e(\alpha_{1}P+N_{2}).
\end{align*}
Combining (\ref{eqn:dpis2}), (\ref{eqn:source2}) and (\ref{eqn:channel2}), we obtain
\begin{equation*}
D_{2} \geq \frac{\sigma^{2}}{1+\frac{(1-\alpha_{1})P}{\alpha_{1}P+N_{2}}}.
\end{equation*}
\end{proof}
Note that the outer bound on the conditional distortion region is obtained as a function of $\alpha_{1}$. We now state a corollary for the case of independent sources.

\begin{corollary} \label{corollary:indouterbound} The distortion region for transmission of independent Gaussian sources over a degraded broadcast channel, $\mathcal{D}(\sigma^{2},0,P,N_{1},N_{2})$, consists of all pairs $(D_{1},D_{2})$ such that
\begin{equation*}
D_{1} \geq \frac{\sigma^{2}}{1+\frac{\alpha_{1}P}{N_{1}}} \textrm{ and }
D_{2} \geq \frac{\sigma^{2}}{1+\frac{(1-\alpha_{1})P}{\alpha_{1}P+N_{2}}}
\end{equation*}
where $\alpha_{1} \in [0,1]$. 
\end{corollary}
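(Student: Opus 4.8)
The plan is to deduce the corollary directly from Theorem~\ref{theorem:outerbound}, exploiting two facts peculiar to the independent case: that $\rho=0$ forces $\sigma^{2}(1-\rho^{2})=\sigma^{2}$, and that handing extra side information to a decoder can never raise its distortion. So the independent-source region gets sandwiched between the conditional outer bound of Theorem~\ref{theorem:outerbound} and a separation-based inner bound, and the two meet.

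First I would do the converse (the ``$\geq$'' direction). Fix any encoder--decoder triple $(f,\phi_{1},\phi_{2})$ for the independent-source problem achieving $(D_{1},D_{2})$. Turn it into a scheme for the conditional problem by using at Receiver~1 the decoder $\phi_{1|2}(Y_{1}^{n},S_{2}^{n})=\phi_{1}(Y_{1}^{n})$, i.e. one that simply ignores the side information $S_{2}^{n}$; this attains the conditional pair $(D_{1},D_{2})$, so $(D_{1},D_{2})\in\mathcal{D}_{c}(\sigma^{2},0,P,N_{1},N_{2})$. (Equivalently: for the fixed encoder $f$, the smallest achievable $D_{1|2}$ is no larger than $D_{1}$, since the MMSE estimate of $S_{1}^{n}$ from $(Y_{1}^{n},S_{2}^{n})$ does at least as well as from $Y_{1}^{n}$ alone.) Applying Theorem~\ref{theorem:outerbound} with $\rho=0$ then supplies an $\alpha_{1}\in[0,1]$ with $D_{1}\geq \sigma^{2}/(1+\alpha_{1}P/N_{1})$ and $D_{2}\geq \sigma^{2}/\big(1+(1-\alpha_{1})P/(\alpha_{1}P+N_{2})\big)$, which is exactly the asserted bound.

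For the achievability (``$\leq$'') direction I would invoke source--channel separation, optimal here because the sources are independent: for each $\alpha_{1}\in[0,1]$, superposition coding achieves the rate pair $R_{1}=\tfrac{1}{2}\log(1+\alpha_{1}P/N_{1})$, $R_{2}=\tfrac{1}{2}\log\big(1+(1-\alpha_{1})P/(\alpha_{1}P+N_{2})\big)$ on the degraded Gaussian broadcast channel \cite{Bergmans}, and the Gaussian rate--distortion function gives $D_{k}=\sigma^{2}2^{-2R_{k}}$ for $k=1,2$. Substituting reproduces the stated boundary, and a routine monotonicity/time-sharing argument covers the interior.

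The argument is mostly bookkeeping; the one point requiring care is that a \emph{single} $\alpha_{1}$ must control both lower bounds at once, which is automatic because Theorem~\ref{theorem:outerbound} already couples $D_{1|2}$ and $D_{2}$ through one parameter. Beyond the observation that revealing $S_{2}^{n}$ to Receiver~1 cannot hurt, there is no real obstacle here.
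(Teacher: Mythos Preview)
Your converse is correct and is morally the same argument the paper gives, just packaged differently. You invoke Theorem~\ref{theorem:outerbound} as a black box via the operational inclusion $\mathcal{D}(\sigma^{2},0,P,N_{1},N_{2})\subseteq\mathcal{D}_{c}(\sigma^{2},0,P,N_{1},N_{2})$ (``ignore the side information''), whereas the paper reopens the proof of Theorem~\ref{theorem:outerbound} and replaces the rate--distortion lower bound on $I(S_{1}^{n};\hat{S}_{1}^{n}\,|\,S_{2}^{n})$ by the chain
\[
I(S_{1}^{n};\hat{S}_{1}^{n}\,|\,S_{2}^{n})
= h(S_{1}^{n}) - h(S_{1}^{n}\,|\,S_{2}^{n},\hat{S}_{1}^{n})
\geq h(S_{1}^{n}) - h(S_{1}^{n}\,|\,\hat{S}_{1}^{n})
\geq \tfrac{n}{2}\log\tfrac{\sigma^{2}}{D_{1}},
\]
using independence of $S_{1}^{n}$ and $S_{2}^{n}$ and that conditioning reduces entropy, then combines with (\ref{eqn:dpi}) and (\ref{eqn:channel1}). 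Both routes express the same fact---that in the independent case the genie $S_{2}^{n}$ does not help Receiver~1---so the coupling through a single $\alpha_{1}$ carries over identically.

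One difference: the paper's proof of the corollary establishes only the converse; achievability is deferred to Theorem~2 in Section~\ref{sec:achievable}, where it is obtained by the hybrid lattice scheme rather than by the separation argument you sketch. Your separation argument is of course also valid here (and simpler), but the paper's point is precisely that hybrid coding matches it.
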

\begin{proof}
We only present the proof for $D_{1}$ since the result for $D_{2}$ is the same as in the theorem above. Note that 
\begin{align}
I(S^{n}_{1};\hat{S}^{n}_{1}|S^{n}_{2}) &= h(S^{n}_{1}|S^{n}_{2})-h({S}^{n}_{1}|S^{n}_{2},\hat{S}^{n}_{1}) \nonumber\\
&=  h(S^{n}_{1})-h({S}^{n}_{1}|S^{n}_{2},\hat{S}^{n}_{1})\label{eqn:ind}\\
&\geq  h(S^{n}_{1})-h({S}^{n}_{1}|\hat{S}^{n}_{1})\label{eqn:condre}\\
& \geq \frac{n}{2}\log 2\pi e\frac{\sigma^{2}}{D_{1}}\label{eqn:source1}
\end{align}
where (\ref{eqn:ind}) uses the independence of $S^{n}_{1}$ and $S^{n}_{2}$, (\ref{eqn:condre}) is true because conditioning reduces entropy and (\ref{eqn:source1}) follows from the rate distortion function for Gaussian sources.\\
Now combining the above with (\ref{eqn:dpi}) and (\ref{eqn:channel1}), we get
\begin{equation*}
D_{1} \geq \frac{\sigma^{2}}{1+\frac{\alpha_{1}P}{N_{1}}}
\end{equation*}
\end{proof}

\section{Achievable Distortion Region}
\label{sec:achievable}
In this section, we present achievable distortion regions for transmitting independent and correlated Gaussian sources. We briefly discuss aspects of the coding scheme that are common for both the independent and correlated cases. The hybrid schemes proposed in the following subsections are based on lattices.  Let $\Lambda$ be a lattice of dimension $n$. Let the quantized value of $x\in \mathbb{R}^{n}$, $Q(x)=\argmin_{r\in \Lambda} \lVert x-r\rVert$. The fundamental Voronoi region of $\Lambda$ is defined as $\mathcal{V}_{0}=\{x\in\mathbb{R}^{n}: Q(x)=0\}$. Also, we denote $x \modl \Lambda = x - Q(x)$. We choose $\Lambda$ to be a `good' lattice for both source and channel coding \cite{Erez2005} and require it to have a second moment constraint $\sigma^{2}(\Lambda)=\frac{\int_{\mathcal{V}_{0}}\lVert x\rVert^{2}dx}{\int_{\mathcal{V}_{0}}dx} = P'$ where $P'$ will be specified later. Note that the transmitter has an average power constraint $P$. 

\subsection{Independent Gaussian Sources}
We now compare a hybrid coding strategy with uncoded transmission for communicating independent Gaussian sources. A hybrid coding scheme is basically a superposition of coded and uncoded transmission. Let the distortion region achieved by the hybrid scheme be  $\mathcal{D}_{h}(\sigma^{2},0,P,N_{1},N_{2})=\{(D_{1},D_{2}):D_{1} \textrm{ and } D_{2} \textrm{ are achieved by the hybrid scheme}\}$.\\ 
\begin{theorem}
$\mathcal{D}_{h}(\sigma^{2},0,P,N_{1},N_{2})$ = $\mathcal{D}(\sigma^{2},0,P,N_{1},N_{2})$
\end{theorem}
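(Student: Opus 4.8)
The plan is to establish the two inclusions separately. The inclusion $\mathcal{D}_h(\sigma^2,0,P,N_1,N_2)\subseteq\mathcal{D}(\sigma^2,0,P,N_1,N_2)$ is immediate, since a hybrid scheme is a particular choice of $f,\phi_1,\phi_2$ and hence every distortion pair it achieves lies in $\mathcal{D}(\sigma^2,0,P,N_1,N_2)$. For the reverse inclusion I would use the converse part of Corollary~\ref{corollary:indouterbound}, which shows that $\mathcal{D}(\sigma^2,0,P,N_1,N_2)$ is contained in the set of $(D_1,D_2)$ satisfying $D_1\ge\sigma^2/(1+\alpha_1 P/N_1)$ and $D_2\ge\sigma^2/(1+(1-\alpha_1)P/(\alpha_1 P+N_2))$ for some $\alpha_1\in[0,1]$; since any pair that dominates an achievable pair is itself achievable, it suffices to show that for each $\alpha_1\in[0,1]$ a hybrid scheme attains the corner point
\[
D_1=\frac{\sigma^2}{1+\frac{\alpha_1 P}{N_1}},\qquad D_2=\frac{\sigma^2}{1+\frac{(1-\alpha_1)P}{\alpha_1 P+N_2}}.
\]
To this end I would transmit $X^n=X_1^n+X_2^n$, where the uncoded component $X_2^n=\sqrt{(1-\alpha_1)P/\sigma^2}\,S_2^n$ is a scaled copy of the weak receiver's source (power $(1-\alpha_1)P$), and the coded component $X_1^n$ carries $S_1^n$ through a nested-lattice code built on $\Lambda$ with second moment $P'=\alpha_1 P$: one quantizes a scaling of $S_1^n$ onto $\Lambda$ and, as in the lattice dirty-paper scheme of \cite{Erez_Shamai_Zamir}, forms $X_1^n$ by reducing the resulting lattice codeword modulo $\Lambda$ after subtracting an MMSE-scaled copy of the \emph{known} interference $X_2^n$ together with a dither, so that $X_1^n$ is independent of $(S_1^n,S_2^n)$, uniform over $\mathcal{V}_0$, and of power $\alpha_1 P$; the average power constraint $P$ is then met.

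Two computations would finish the argument. For the weak receiver, $Y_2^n=X_1^n+X_2^n+Z_2^n$ with $X_1^n$ independent of $S_2^n$ and of variance $\alpha_1 P$, so the linear MMSE estimate of $S_2^n$ from $Y_2^n$ has per-letter distortion $\sigma^2(\alpha_1 P+N_2)/(P+N_2)$, which is exactly the claimed $D_2$. For the strong receiver, I would apply the modulo-$\Lambda$, MMSE-scaled decoder of \cite{Erez_Shamai_Zamir} to $Y_1^n=X_1^n+X_2^n+Z_1^n$; the point to invoke is that this decoder cancels the interference $X_2^n$ exactly, so that the lattice code effectively sees an interference-free Gaussian channel of SNR $\alpha_1 P/N_1$, whose capacity $\tfrac12\log(1+\alpha_1 P/N_1)$ coincides with the rate $\tfrac12\log(\sigma^2/D_1)$ needed to describe $S_1^n$ at distortion $D_1$. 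Since $\Lambda$ is simultaneously a good source and channel lattice \cite{Erez2005}, it can be used to quantize $S_1^n$ at this rate and to carry the quantization index reliably over the effective channel, so that Receiver~1 recovers $\hat{S}_1^n$ with per-letter MSE $D_1$ up to terms that vanish as $n\to\infty$. Letting $n\to\infty$ and sweeping $\alpha_1$ over $[0,1]$ (the endpoints being the degenerate cases in which one source is not transmitted) exhibits every corner point, hence the whole region of Corollary~\ref{corollary:indouterbound}, as achievable by hybrid coding, and equality follows.

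The hard part will be the rigorous justification of the strong-receiver step: choosing $P'$ and the MMSE scaling factor consistently, arranging the dithers so that $X_1^n$ is genuinely independent of the sources while the power constraint is respected, and --- most delicately --- bounding the contribution of the exponentially rare lattice decoding-error events to the \emph{expected} squared error, so that $D_1$ is attained in the limit rather than merely with high probability. Establishing that a single lattice $\Lambda$ can play both roles --- a good source code for quantizing $S_1^n$ and a good channel code for the dirty-paper channel it is used on --- in the sense of \cite{Erez2005,Erez_Shamai_Zamir}, and tracking how its quantization error combines with the channel-decoding output, is where essentially all of the effort lies; the $D_2$ computation and the inclusion $\mathcal{D}_h\subseteq\mathcal{D}$ are routine by comparison.
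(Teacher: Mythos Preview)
Your high-level plan is sound and would indeed prove the theorem, but it is \emph{not} the route the paper takes, and the difference is worth noting. You treat the strong user's source digitally: quantize $S_1^n$ onto a fine lattice, then carry the resulting index over the channel via lattice dirty-paper coding with shaping lattice $\Lambda$ of second moment $P'=\alpha_1 P$; the receiver decodes the codeword and recovers the quantized source. This is essentially a separation argument for user~1 grafted onto an analog layer for user~2 --- valid because separation is optimal for independent sources, and still ``hybrid'' in the paper's sense because $S_2^n$ is sent uncoded. The paper, by contrast, uses a genuinely joint source--channel construction in the style of \cite{Kochman2008}: it embeds $S_1^n$ \emph{itself}, uncoded, inside the modulo-$\Lambda$ operation, transmitting $X_1^n=[S_1^n+\beta\gamma S_2^n+U^n]\modl\Lambda$ with $P'=\sigma^2(\alpha_1 P+N_1)/(\alpha_1 P)$ (not $\alpha_1 P$). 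Receiver~1 computes $[\delta Y_1^n-U^n]\modl\Lambda$, which --- after the MMSE choice $\delta=\alpha P'/(\alpha^2 P'+N_1)$, $\beta=-\delta$, and the goodness condition $\sigma^2+P'N_1/(\alpha^2 P'+N_1)\le P'$ --- unwraps to the analog observation $S_1^n+W_1^n$ with $\mathrm{Var}(W_1)=\sigma^2 N_1/(\alpha_1 P)$; a scalar MMSE estimate then gives $D_1=\sigma^2/(1+\alpha_1 P/N_1)$ directly. No separate source quantizer or nested pair is needed, and the same analog-in-lattice structure is what the paper reuses verbatim for the correlated case in the next theorem (replacing $S_1^n$ by the innovation $V^n$). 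Your approach is arguably cleaner to cite --- standard lattice DPC capacity plus Gaussian rate--distortion --- but be aware that (i) your description needs a nested pair $(\Lambda_f,\Lambda)$, since quantizing onto $\Lambda$ and then reducing modulo $\Lambda$ would give zero, and (ii) it would not carry over to the correlated setting the way the paper's scheme does.
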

\begin{proof}
Consider a hybrid coding scheme in which the coded portion is given by
\begin{equation*}
X^{n}_{1} = [S^{n}_{1}+\beta\gamma S^{n}_{2}+U^{n}]\modl\Lambda,
\end{equation*}
where $\beta$ and $\gamma$ are constants which will be specified later and $U^{n}$ is the dither which is known apriori to both the transmitter and receivers and is uniformly distributed in $\mathcal{V}_{0}$. We send $S^{n}_{2}$ uncoded after scaling it appropriately to meet the power constraint. In the following, $\alpha_{1}$ represents the power allocation in the outer bound discussion. The channel input is a superposition of coded and uncoded transmission and is expressed as  
\begin{equation*}
X^{n} = \alpha X^{n}_{1}+\gamma S^{n}_{2},
\end{equation*}
where $\alpha$ satisfies \begin{equation}\label{eqn:poweq}
\alpha^{2} P' = \alpha_{1} P
\end{equation}
and
\begin{equation}\label{eqn:gamma} 
\gamma = \sqrt{\frac{(1-\alpha_{1})P}{\sigma^{2}}}.
\end{equation}
Note that $X^{n}_{1}$ and $S^{n}_{2}$ are independent of each other on account of addition of the uniform dither before the modulo operation. We also observe that the scheme is similar to the dirty paper coding strategy in \cite{Erez_Shamai_Zamir} where $S^{n}_{2}$ resembles the interference known at the transmitter.
The output at the receivers is given by 
\begin{equation}\label{eqn:output}
Y^{n}_{k} = X^{n}+Z^{n}_{k}
\end{equation}
for $k=1,2$. \\
\\
At Receiver 1, we perform the following series of operations: 
\begin{align}
R^{n}_{1} =& [\delta Y^{n}_{1}-U^{n}]\modl\Lambda \label{eqn:start}\\
=& [(\delta\alpha-1)X^{n}_{1}+S^{n}_{1}+(\beta\gamma+\delta\gamma)S^{n}_{2}+\delta Z^{n}_{1}]\modl\Lambda\\
=& [S^{n}_{1}+\gamma(\delta+\beta)S^{n}_{2}+(\delta\alpha-1)X^{n}_{1}+\delta Z^{n}_{1}]\modl\Lambda\\
=& [S^{n}_{1}+W^{n}_{1}]\modl\Lambda \label{eqn:end}
\end{align}
where $W^{n}_{1} = \gamma(\delta+\beta)S^{n}_{2}+(\delta\alpha-1)X^{n}_{1}+\delta Z^{n}_{1}$ is the effective noise term independent of $S^{n}_{1}$. Choosing $\beta=-\delta$ and $\delta = \frac{\alpha P'}{\alpha^{2} P'+N_{1}}$, we reduce the variance of the effective noise to $\frac{P'N_{1}}{\alpha^{2}P'+N_{1}}$. Since $\Lambda$ is a `good' channel lattice, we require $P'$ to satisfy
\begin{equation}\label{eqn:corrdec}
\sigma^{2} + \frac{P'N_{1}}{\alpha^{2}P'+N_{1}} \leq P',
\end{equation}
for correct decoding with high probability as $n\rightarrow\infty$ \cite{Erez2004}\cite{Kochman2008}. This leads to 
\begin{equation*}
R^{n}_{1} = [S^{n}_{1}+W^{n}_{1}]\modl\Lambda = S^{n}_{1}+W^{n}_{1}.
\end{equation*}
Allowing $P'$ to satisfy (\ref{eqn:corrdec}) with equality and using (\ref{eqn:poweq}), we obtain
\begin{equation*}
P' = \sigma^{2}\frac{\alpha_{1}P+N_{1}}{\alpha_{1}P} \textrm{ and } \alpha=\sqrt{\frac{\alpha_{1}P}{P'}}.
\end{equation*}
Thus $W_{1}$ has variance $\frac{P'N_{1}}{\alpha^{2}P'+N_{1}} = \frac{\sigma^{2}N_{1}}{\alpha_{1}P}$. The estimator 
\begin{equation*}
\hat{S}^{n}_{1} = \frac{\sigma^{2}}{\sigma^{2}+\sigma^{2}\frac{N_{1}}{\alpha_{1}P}}R^{n}_{1},
\end{equation*}
achieves a distortion in $S_{1}$ of 
\begin{equation*}
D_{1} = \sigma^{2} - \frac{\sigma^{4}}{\sigma^{2}+\sigma^{2}\frac{N_{1}}{\alpha_{1}P}}= \frac{\sigma^{2}}{1+\frac{\alpha_{1}P}{N_{1}}}.
\end{equation*}
On the other hand, Receiver 2 observes
\begin{equation*}
Y^{n}_{2} = \alpha X^{n}_{1}+\gamma S^{n}_{2}+Z^{n}_{2} = \gamma S^{n}_{2}+W^{n}_{2}, 
\end{equation*}
where $W^{n}_{2}=\alpha X^{n}_{1}+Z^{n}_{2}$ is treated as the effective noise which is independent of $S^{n}_{2}$. We now construct the linear estimator 
\begin{equation*}
\hat{S}^{n}_{2} = \frac{\gamma\sigma^{2}}{P+N_{2}}Y^{n}_{2}.
\end{equation*}
Using (\ref{eqn:gamma}), this estimator results in a distortion 
\begin{equation*}
D_{2} = \frac{\sigma^{2}}{1+\frac{(1-\alpha_{1})P}{\alpha_{1}P+N_2}}.
\end{equation*}
Thus the superposition scheme described above achieves the optimal distortions for sources $S_{1}$ and $S_{2}$. 
\end{proof}

We now show that uncoded transmission is sub-optimal for independent sources through the following theorem. Let the distortion region achieved by uncoded transmission be $\mathcal{D}_{u}(\sigma^{2},0,P,N_{1},N_{2})$. \\
\begin{theorem}
$\mathcal{D}_{u}$ is equal to the set of all distortion pairs $(D_{1},D_{2})$ such that 
\begin{equation*}
D_{1} \geq \frac{\sigma^{2}}{1+\frac{\alpha_{1}P}{(1-\alpha_{1})P+N_{1}}} \textrm{ and }
D_{2} \geq \frac{\sigma^{2}}{1+\frac{(1-\alpha_{1})P}{\alpha_{1}P+N_{2}}}
\end{equation*}
where $\alpha_{1}\in[0,1]$. Further,
\begin{equation*}
\mathcal{D}_{u}(\sigma^{2},0,P,N_{1},N_{2}) \subset \mathcal{D}(\sigma^{2},0,P,N_{1},N_{2}). 
\end{equation*}
\end{theorem}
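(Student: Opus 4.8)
The plan is to prove the theorem in two parts: first characterize the uncoded distortion region $\mathcal{D}_u$ exactly, and then show that it is strictly contained in the optimal region $\mathcal{D}(\sigma^2,0,P,N_1,N_2)$ from Corollary~\ref{corollary:indouterbound}. For the first part, uncoded transmission means the channel input is $X^n = \sqrt{\alpha_1 P/\sigma^2}\, S_1^n + \sqrt{(1-\alpha_1)P/\sigma^2}\, S_2^n$ for some power split $\alpha_1\in[0,1]$ (since the sources are independent and equal-variance, this is the natural scaling meeting the power constraint). Receiver $k$ sees $Y_k^n = X^n + Z_k^n$ and forms the linear MMSE estimate of $S_k^n$. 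For Receiver~1, the interference-plus-noise power is $(1-\alpha_1)P + N_1$, so the MMSE gives $D_1 = \sigma^2/(1 + \frac{\alpha_1 P}{(1-\alpha_1)P + N_1})$; for Receiver~2, the interference-plus-noise power is $\alpha_1 P + N_2$, giving $D_2 = \sigma^2/(1 + \frac{(1-\alpha_1)P}{\alpha_1 P + N_2})$, exactly the claimed expressions. I would also note that linear estimation is optimal here since $(S_k^n, Y_k^n)$ is jointly Gaussian, so these are the best achievable distortions for a given $\alpha_1$, and sweeping $\alpha_1\in[0,1]$ traces out $\mathcal{D}_u$.

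For the strict-containment part, the key comparison is between the $D_1$ achievable by uncoded transmission and the $D_1$ in the outer bound, for a fixed $D_2$. Since the $D_2$ expressions coincide, it suffices to fix $\alpha_1$ (equivalently fix $D_2$) and compare the two $D_1$ values: uncoded gives $\sigma^2/(1+\frac{\alpha_1 P}{(1-\alpha_1)P+N_1})$ while the optimum is $\sigma^2/(1+\frac{\alpha_1 P}{N_1})$. Because $(1-\alpha_1)P + N_1 > N_1$ whenever $\alpha_1 < 1$ and $P>0$, we have $\frac{\alpha_1 P}{(1-\alpha_1)P+N_1} < \frac{\alpha_1 P}{N_1}$, hence the uncoded $D_1$ is strictly larger than the optimal $D_1$ for every $\alpha_1\in(0,1)$. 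The endpoints $\alpha_1=0$ and $\alpha_1=1$ give equality, but for all interior points the uncoded point is strictly dominated. This shows $\mathcal{D}_u \subseteq \mathcal{D}$ with the containment strict: there exist $(D_1,D_2)$ pairs in $\mathcal{D}$ (those on the optimal tradeoff curve with $\alpha_1\in(0,1)$) that are not in $\mathcal{D}_u$.

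I expect the main subtlety, rather than a genuine obstacle, to be being careful about what "the region" means when parametrized by $\alpha_1$: one must argue that the \emph{entire} boundary of $\mathcal{D}_u$ lies strictly outside (in the sense of worse distortion than) the boundary of $\mathcal{D}$, not merely that a single suboptimal $\alpha_1$ choice is bad. The clean way to handle this is the fixed-$D_2$ slicing above, since the map $\alpha_1 \mapsto D_2$ is monotonic, so each achievable $D_2$ corresponds to a unique $\alpha_1$ in both regions, and at that common $\alpha_1$ the uncoded $D_1$ strictly exceeds the optimal $D_1$. A minor point to state explicitly is that for $\alpha_1\in\{0,1\}$ the two regions touch, so the containment is proper but the boundaries are not disjoint. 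I would close by remarking that the intuition is exactly that uncoded transmission forces Receiver~1 to treat the other source's signal as pure interference, whereas an optimal (e.g. hybrid or separation-based) scheme lets Receiver~1 effectively remove it, recovering the full SNR $\alpha_1 P/N_1$.
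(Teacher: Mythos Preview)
Your proposal is correct and follows essentially the same approach as the paper: the same uncoded input $X^n=\sqrt{\alpha_1 P/\sigma^2}\,S_1^n+\sqrt{(1-\alpha_1)P/\sigma^2}\,S_2^n$, the same linear MMSE computation at each receiver yielding the stated $D_1,D_2$, and the same observation that while the uncoded $D_2$ matches the outer bound, the uncoded $D_1$ strictly exceeds the optimal one for $\alpha_1\in(0,1)$. Your write-up is in fact a bit more careful than the paper's about justifying the strict containment (fixed-$D_2$ slicing via the monotone map $\alpha_1\mapsto D_2$, and noting that the boundaries touch only at $\alpha_1\in\{0,1\}$).
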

\begin{proof}
The uncoded transmission strategy is to send a linear combination of $S^{n}_{1}$ and $S^{n}_{2}$. The power allocated for sending $S^{n}_{1}$ and $S^{n}_{2}$ are $\alpha_{1}P$ and $(1-\alpha_{1})P$ respectively. Therefore we transmit,
\begin{equation*}
X^{n} = \sqrt{\frac{\alpha_{1}P}{\sigma^{2}}}S^{n}_{1}+\sqrt{\frac{(1-\alpha_{1})P}{\sigma^{2}}}S^{n}_{2}.
\end{equation*}
Receiver 1 obtains a minimum mean squared error (MMSE) estimate of $S^{n}_{1}$ given $Y^{n}_{1}$ as 
$\hat{S}^{n}_{1} = \frac{\sqrt{\alpha_{1}P\sigma^{2}}}{P+N_{1}}Y^{n}_{1}$. This leads to a distortion in $S_{1}$,
\begin{equation*}
D_{1} = \frac{\sigma^{2}}{1+\frac{\alpha_{1}P}{(1-\alpha_{1})P+N_{1}}}.
\end{equation*}
The estimate of $S^{n}_{2}$ is given by $\hat{S}^{n}_{2} = \frac{\sqrt{(1-\alpha_{1})P\sigma^{2}}}{P+N_{2}}Y^{n}_{2}$ resulting in 
\begin{equation*}
D_{2} = \frac{\sigma^{2}}{1+\frac{(1-\alpha_{1})P}{\alpha_{1}P+N_{2}}}.
\end{equation*}
We observe that while uncoded transmission scheme achieves the optimal distortion in $S_{2}$, the distortion in $S_{1}$ is higher than the optimal distortion. Thus $\mathcal{D}_{u}(\sigma^{2},0,P,N_{1},N_{2}) \subset \mathcal{D}(\sigma^{2},0,P,N_{1},N_{2})$.
\end{proof}

\subsection{Correlated Gaussian Sources}
Next, we extend our hybrid coding scheme to the problem of correlated sources, with the goal of achieving a better distortion region than source channel separation. We consider two source channel separation schemes in this section, Scheme A and Scheme B. Scheme A treats the messages obtained by compressing $S_{1}$ and $S_{2}$ as independent and communicates them reliably over the broadcast channel. Scheme B explores the idea of using Wyner Ziv coding for communicating $S_{1}$. Let the distortion region achieved by the hybrid scheme be $\mathcal{D}_{h}(\sigma^{2},\rho,P,N_{1},N_{2})$ and the distortion region achieved by Scheme A and Scheme B be $\mathcal{D}_{A}(\sigma^{2},\rho,P,N_{1},N_{2})$ and $\mathcal{D}_{B}(\sigma^{2},\rho,P,N_{1},N_{2})$.\\
\begin{theorem}
If $\alpha_{1}<\frac{1}{2}$ and $\frac{P}{N_{1}} < \frac{1-2\alpha_{1}}{\alpha^{2}_{1}}$, then
\begin{equation*}
\mathcal{D}_{A}(\sigma^{2},\rho,P,N_{1},N_{2}) \subset \mathcal{D}_{h}(\sigma^{2},\rho,P,N_{1},N_{2}). 
\end{equation*}
For any $\frac{P}{N_{1}}>0$ and $0\leq\alpha_{1}\leq1$,
\begin{equation*}
\mathcal{D}_{B}(\sigma^{2},\rho,P,N_{1},N_{2}) \subset \mathcal{D}_{h}(\sigma^{2},\rho,P,N_{1},N_{2}).
\end{equation*}
\end{theorem}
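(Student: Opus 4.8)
The plan is to first pin down the achievable region $\mathcal{D}_{h}(\sigma^{2},\rho,P,N_{1},N_{2})$ of the hybrid scheme for correlated sources, then obtain $\mathcal{D}_{A}$ and $\mathcal{D}_{B}$ from standard source--channel separation arguments, and finally compare the three. The comparison is simplified by one observation: at a fixed power split $\alpha_{1}$ all three schemes realize the \emph{same} distortion at Receiver~2, namely $D_{2}=\sigma^{2}/\big(1+\tfrac{(1-\alpha_{1})P}{\alpha_{1}P+N_{2}}\big)$, so each of the two set inclusions reduces to a scalar comparison of the $D_{1}$ coordinates.

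For $\mathcal{D}_{h}$ I would extend the construction in the proof of Theorem~2 so as to exploit the correlation. Since the encoder knows $S_{2}^{n}$, write $S_{1}=\rho S_{2}+\sqrt{1-\rho^{2}}\,V_{1}$ with $V_{1}\perp S_{2}$, and let the modulo--lattice layer carry the innovation $S_{1}^{n}-\rho S_{2}^{n}$, using $S_{2}^{n}$ for the dirty--paper pre--subtraction of the superimposed uncoded signal $\gamma S_{2}^{n}$ exactly as the coefficient $\beta$ was used there. Receiver~2 sees the same signal as in Theorem~2 and is unaffected. At Receiver~1 the manipulations of (\ref{eqn:start})--(\ref{eqn:end}), now with condition (\ref{eqn:corrdec}) modified by replacing $\sigma^{2}$ with $\sigma^{2}(1-\rho^{2})$, give $R_{1}^{n}=(S_{1}^{n}-\rho S_{2}^{n})+W_{1}^{n}$ with $\mathrm{Var}(W_{1})=\sigma^{2}(1-\rho^{2})N_{1}/(\alpha_{1}P)$, so the innovation is estimated with distortion $(1-\rho^{2})\,\sigma^{2}/(1+\tfrac{\alpha_{1}P}{N_{1}})$. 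In addition Receiver~1 forms its MMSE estimate $\hat{S}_{2}^{(1)}$ of $S_{2}^{n}$ from $Y_{1}^{n}$, treating the coded layer as noise, with error $D_{2}^{(1)}=\sigma^{2}/\big(1+\tfrac{(1-\alpha_{1})P}{\alpha_{1}P+N_{1}}\big)$, and outputs $\hat{S}_{1}=\rho\hat{S}_{2}^{(1)}+\widehat{(S_{1}-\rho S_{2})}$. Because $\delta$ is chosen as the MMSE coefficient, $W_{1}^{n}$ is orthogonal to $Y_{1}^{n}$, so the two error contributions add without a cross term and
\begin{equation*}
D_{1,h}=\rho^{2}D_{2}^{(1)}+(1-\rho^{2})\frac{\sigma^{2}}{1+\tfrac{\alpha_{1}P}{N_{1}}}.
\end{equation*}

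For $\mathcal{D}_{A}$ and $\mathcal{D}_{B}$ I would invoke routine separation. In Scheme~A, compress $S_{1}$ and $S_{2}$ into independent messages at rates $R_{1}=\tfrac{1}{2}\log(1+\tfrac{\alpha_{1}P}{N_{1}})$ and $R_{2}=\tfrac{1}{2}\log(1+\tfrac{(1-\alpha_{1})P}{\alpha_{1}P+N_{2}})$, transmit them by superposition over the degraded broadcast channel, and decode each message at its own receiver; this yields $D_{1,A}=\sigma^{2}/(1+\tfrac{\alpha_{1}P}{N_{1}})$ and $D_{2,A}=D_{2}$. In Scheme~B, keep the same description of $S_{2}$ as cloud centre, let Receiver~1 (the stronger receiver) decode it, and use the reconstruction $\hat{S}_{2}$ of distortion $D_{2}$ as Wyner--Ziv side information for a description of $S_{1}$ carried at rate $R_{1}$; since for jointly Gaussian $(S_{1},\hat{S}_{2})$ the Wyner--Ziv rate equals the conditional rate--distortion function with $\mathrm{Var}(S_{1}\mid\hat{S}_{2})=\sigma^{2}(1-\rho^{2})+\rho^{2}D_{2}$, this yields $D_{1,B}=(\sigma^{2}(1-\rho^{2})+\rho^{2}D_{2})/(1+\tfrac{\alpha_{1}P}{N_{1}})$ and $D_{2,B}=D_{2}$.

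It then remains to compare $D_{1}$ coordinates. Using $D_{1,h}=(1-\rho^{2})D_{1,A}+\rho^{2}D_{2}^{(1)}$, the inclusion $\mathcal{D}_{A}\subset\mathcal{D}_{h}$ is equivalent to $D_{2}^{(1)}<D_{1,A}$, i.e. $\tfrac{(1-\alpha_{1})P}{\alpha_{1}P+N_{1}}>\tfrac{\alpha_{1}P}{N_{1}}$; clearing denominators this is exactly $N_{1}(1-2\alpha_{1})>\alpha_{1}^{2}P$, equivalently $\alpha_{1}<\tfrac{1}{2}$ and $P/N_{1}<(1-2\alpha_{1})/\alpha_{1}^{2}$, which is the stated hypothesis (and when it fails the hybrid is no better than Scheme~A at that $\alpha_{1}$, so the hypothesis is needed). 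For Scheme~B the same reduction gives $\mathcal{D}_{B}\subset\mathcal{D}_{h}\iff D_{2}^{(1)}<D_{2}/(1+\tfrac{\alpha_{1}P}{N_{1}})$; substituting the explicit expressions turns this into a polynomial inequality in $P,N_{1},N_{2},\alpha_{1}$ whose validity I would establish using the degradedness $N_{2}>N_{1}$. The main obstacle is the first step: making the achievability of $\mathcal{D}_{h}$ rigorous --- in particular the correct analogue of (\ref{eqn:corrdec}) that guarantees reliable modulo--lattice decoding of the innovation, and the orthogonality between $W_{1}^{n}$ and $Y_{1}^{n}$ that makes $D_{1,h}$ split as above. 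After that, the two inclusions are the short algebraic arguments just described.
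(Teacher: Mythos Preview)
Your derivation of $\mathcal{D}_{h}$ and the comparison with Scheme~A coincide with the paper's argument: the same innovation-carrying modulo-lattice layer, the same orthogonality between the two error components at Receiver~1 (the paper verifies it as $\mathbb{E}[W_{11}W_{12}]=0$, which is exactly your $W_{1}\perp Y_{1}$), and the same reduction of $\mathcal{D}_{A}\subset\mathcal{D}_{h}$ to $D_{2}^{(1)}<D_{1,A}$.

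The gap is in Scheme~B. The paper's Scheme~B is \emph{not} full Wyner--Ziv coding of $S_{1}$ with side information $\hat{S}_{2}$; it is the simpler layered separation: compress the innovation $V^{n}=S_{1}^{n}-\rho S_{2}^{n}$ at rate $R_{1}$ and $S_{2}^{n}$ at rate $R_{2}$, let Receiver~1 decode both messages and output $\hat{S}_{1}=\rho\hat{S}_{2}+\hat{V}$. This gives
\[
D_{1,B}^{\text{paper}}=\frac{\sigma^{2}(1-\rho^{2})}{1+\tfrac{\alpha_{1}P}{N_{1}}}+\rho^{2}D_{2},
\]
so $D_{1,h}<D_{1,B}^{\text{paper}}$ reduces to $D_{2}^{(1)}<D_{2}$, which follows for every $\alpha_{1}\in[0,1)$ directly from $N_{1}<N_{2}$. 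Your version of Scheme~B is strictly stronger, with
\[
D_{1,B}^{\text{yours}}=\frac{\sigma^{2}(1-\rho^{2})+\rho^{2}D_{2}}{1+\tfrac{\alpha_{1}P}{N_{1}}},
\]
and the inequality you propose to verify, $D_{2}^{(1)}<D_{2}/(1+\tfrac{\alpha_{1}P}{N_{1}})$, is \emph{false} in general: at $\alpha_{1}=1$ it reads $\sigma^{2}<\sigma^{2}N_{1}/(N_{1}+P)$, and more concretely $D_{1,h}=(1-\rho^{2})\sigma^{2}/(1+P/N_{1})+\rho^{2}\sigma^{2}>\sigma^{2}/(1+P/N_{1})=D_{1,B}^{\text{yours}}$ whenever $\rho\neq 0$ and $P>0$. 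So the polynomial inequality you planned to ``establish using $N_{2}>N_{1}$'' cannot hold for all $\alpha_{1}$, and the second inclusion does not follow from your Scheme~B. Replacing your Wyner--Ziv description by the paper's decompose-and-add Scheme~B fixes the argument immediately.
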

\begin{proof}
We begin by noting that $S^{n}_{1}$ may be represented as $S^{n}_{1}=\rho S^{n}_{2}+V^{n}$ with $V^{n}$ independent of $S^{n}_{2}$ and Gaussian distributed with mean zero and variance $\sigma^{2}(1-\rho^{2})$. The main idea of the hybrid coding scheme is to use the scheme proposed in the previous subsection to send $V^{n}$ and $S^{n}_{2}$, which are independent. Thus, the transmitter forms the coded portion of the channel input similar to the independent case using the lattice $\Lambda$ as 
\begin{equation*}
X^{n}_{1} = [V^{n}+\beta\gamma S^{n}_{2}+U^{n}]\modl\Lambda. 
\end{equation*}
As before, $S_{2}^{n}$ is sent uncoded and superposed on the coded portion after an appropriate scaling to satisfy the power constraint. Thus the channel input is given by 
\begin{equation*}
X^{n} = \alpha X^{n}_{1}+\gamma S^{n}_{2},
\end{equation*}
where $\alpha$ and $\gamma$ satisfy (\ref{eqn:poweq}) and (\ref{eqn:gamma}) respectively. \\
\\
The channel output at Receiver 1 is expressed as 
\begin{equation}\label{eqn:y1}
Y^{n}_{1} = \alpha X^{n}_{1}+\gamma S^{n}_{2}+Z^{n}_{1}.
\end{equation}
The receiver can now perform the same sequence of operations as in Equations (\ref{eqn:start})-(\ref{eqn:end}) to obtain 
\begin{equation*}
R^{n}_{11} = [V^{n}+W^{n}_{11}]\modl\Lambda 
\end{equation*}
where $W^{n}_{11}=\gamma(\delta+\beta)S^{n}_{2}+(\delta\alpha-1)X^{n}_{1}+\delta Z^{n}_{1}$ represents the effective noise independent of $V^{n}$. By choosing $P' = \sigma^{2}(1-\rho^{2})\frac{\alpha_{1}P+N_{1}}{\alpha_{1}P}$, $\alpha=\sqrt{\frac{\alpha_{1}P}{P'}}$, $\delta = \frac{\alpha P'}{\alpha_{1}P+N_{1}}$ and $\beta=-\delta$, we get
\begin{equation}\label{eqn:vnoisy}
R^{n}_{11}=V^{n}+W^{n}_{11},
\end{equation}
where the variance of $W_{11}$ is given by $\frac{\sigma^{2}(1-\rho)N_{1}}{\alpha_{1}P}$. Observe that we can rewrite (\ref{eqn:y1}) as a noisy observation of $S^{n}_{2}$ in the form
\begin{equation}\label{eqn:snoisy}
R^{n}_{12}=S^{n}_{2}+W^{n}_{12}
\end{equation}
where $W^{n}_{12}=\frac{(\alpha X^{n}_{1}+Z^{n}_{1})}{\gamma}$ is independent of $S^{n}_{2}$ and has variance $\frac{(\alpha_{1}P+N_{1})\sigma^{2}}{(1-\alpha_{1})P}$. \\
\\
Now, using the noisy observations of $V^{n}$ and $S^{n}_{2}$ in (\ref{eqn:vnoisy}) and (\ref{eqn:snoisy}) respectively, we construct a linear estimator of $S^{n}_{1}$ given $R^{n}_{11}$ and $R^{n}_{12}$. Before finding the estimator, we observe that $W^{n}_{11}$ and $W^{n}_{12}$ are uncorrelated for the choice of constants $\alpha$ and $\delta$ stated above. For each time instant $i$,
\begin{align*}
\mathbb{E}[W_{11i}W_{12i}]&= \mathbb{E}[((\delta\alpha-1)X_{1i}+\delta Z_{1i})\frac{(\alpha X_{1i}+Z_{1i})}{\gamma}]\\
&=\frac{1}{\gamma}\Big((\delta\alpha-1)\alpha P'+\delta N_{1}\Big)=0
\end{align*}
since $\delta=\frac{\alpha P'}{\alpha^{2}P'+N_{1}}$. Therefore $R^{n}_{11}$ and $R^{n}_{12}$ are uncorrelated as well and the linear estimator of $S^{n}_{1}$ is given by 
\begin{equation*}
\hat{S}^{n}_{1} = \frac{1}{1+\frac{N_{1}}{\alpha_{1}P}}R^{n}_{11}+\frac{\rho}{1+\frac{\alpha_{1}P+N_{1}}{(1-\alpha_{1})P}}R^{n}_{12}.
\end{equation*}
The distortion resulting in $S_{1}$ is calculated to be
\begin{equation*}
D_{1} = \frac{\sigma^{2}(1-\rho^{2})}{1+\frac{\alpha_{1}P}{N_{1}}}+\frac{\rho^{2}\sigma^{2}}{1+\frac{(1-\alpha_{1})P}{\alpha_{1}P+N_{1}}}.
\end{equation*}
Receiver 2 obtains the estimate of $S^{n}_{2}$ in the same fashion as the independent case by treating the coded portion of the received signal as noise to obtain a distortion
\begin{equation*}
D_{2} = \frac{\sigma^{2}}{1+\frac{(1-\alpha_{1})P}{\alpha_{1}P+N_2}}.
\end{equation*}
Thus 
\begin{align*}
\mathcal{D}_{h} = \Bigg\{(D_{1},D_{2}): D_{1} &\geq \frac{\sigma^{2}(1-\rho^{2})}{1+\frac{\alpha_{1}P}{N_{1}}}+\frac{\rho^{2}\sigma^{2}}{1+\frac{(1-\alpha_{1})P}{\alpha_{1}P+N_{1}}}\\ 
D_{2} &\geq \frac{\sigma^{2}}{1+\frac{(1-\alpha_{1})P}{\alpha_{1}P+N_2}}\Bigg\}.
\end{align*}\\
We now compare the distortion region achieved by the hybrid coding scheme with two possible source channel separation schemes, Scheme A and Scheme B. In Scheme A, $S^{n}_{1}$ and $S^{n}_{2}$ are compressed to obtain messages that can be reliably transmitted over the broadcast channel. The distortion region achieved by this scheme is given by the set
\begin{equation*}
\mathcal{D}_{A} = \Bigg\{(D_{1},D_{2}): D_{1} \geq \frac{\sigma^{2}}{1+\frac{\alpha_{1}P}{N_{1}}}, 
D_{2} \geq \frac{\sigma^{2}}{1+\frac{(1-\alpha_{1})P}{\alpha_{1}P+N_2}}\Bigg\}.
\end{equation*}
The distortion in $S_{1}$ incurred by the hybrid scheme is smaller than the distortion that is achieved by the above source channel separation scheme if
\begin{align*}
&\frac{\sigma^{2}(1-\rho^{2})}{1+\frac{\alpha_{1}P}{N_{1}}}+\frac{\rho^{2}\sigma^{2}}{1+\frac{(1-\alpha_{1})P}{\alpha_{1}P+N_{1}}} < \frac{\sigma^{2}}{1+\frac{\alpha_{1}P}{N_{1}}}\\
\Rightarrow & \frac{P}{N_{1}} < \frac{1-2\alpha_{1}}{\alpha^{2}_{1}}.
\end{align*} 
Thus $\mathcal{D}_{A}\subset \mathcal{D}_{h}$ for $\alpha_{1}<\frac{1}{2}$ and $\frac{P}{N_{1}} < \frac{1-2\alpha_{1}}{\alpha^{2}_{1}}$. \\

In Scheme B, we use the representation $S_{1}^{n} = \rho S_{2}^{n}+V^{n}$, stated earlier in this section. The transmitter compresses $V^{n}$ and $S^{n}_{2}$ to obtain messages that can be reliably communicated to Receivers 1 and 2 respectively. Due to the degraded nature of the broadcast channel, Receiver 1 can mimic Receiver 2 to obtain an estimate of $S^{n}_{2}$. Now Receiver 1, combines the estimates of $S^{n}_{2}$ and $V^{n}$ to construct an estimate of $S^{n}_{1}$. The distortion region thus achieved is given by
\begin{align*}
\mathcal{D}_{B} = \Bigg\{(D_{1},D_{2}): D_{1} &\geq  \frac{\sigma^{2}(1-\rho^{2})}{1+\frac{\alpha_{1}P}{N_{1}}}+\frac{\rho^{2}\sigma^{2}}{1+\frac{(1-\alpha_{1})P}{\alpha_{1}P+N_{2}}}, \\
D_{2} &\geq \frac{\sigma^{2}}{1+\frac{(1-\alpha_{1})P}{\alpha_{1}P+N_2}}\Bigg\}.
\end{align*}
Hence, $\mathcal{D}_{B}\subset \mathcal{D}_{h}$ for $\frac{P}{N}>0$ and $0\leq\alpha_{1}\leq1$. 

\end{proof}
\section{Conclusions}
\label{sec:conclude}  
We present a hybrid coding scheme for source channel communication of correlated Gaussian sources over broadcast channels, that resembles dirty paper coding. We show that the scheme is optimal in terms of achieving the smallest distortion for communicating independent sources. Further, we prove that for a non-trivial set of SNR, the scheme achieves a lower distortion than source channel separation. As a next step, we plan to compare uncoded,  hybrid coded and separately coded transmission schemes to determine regimes where each outperforms the others.

\end{document}